\newcommand{\miniheading}[1]{\vspace{5pt}\noindent{\bf #1}}
\newcommand{\eat}[1]{}
\newtheorem{theorem}{Theorem}
\newenvironment{definition}[1]{\begin{trivlist}
\item[\hskip \labelsep {\bfseries Definition (\em{#1}):}]}{\end{trivlist}}
\newenvironment{notation}{\begin{trivlist}
\item[\hskip \labelsep {\bfseries Notation:}]}{\end{trivlist}}
\begin{document}
%
\conferenceinfo{(submitted for publication July 2015)}{}

\title{The Tangent Search Engine: Improved Similarity Metrics and Scalability for Math Formula Search}

\numberofauthors{2} 
\author{
\alignauthor
Richard Zanibbi\\
Kenny Davila\\
	\affaddr{Rochester Institute of Technology, USA}\\
	\email{\{rxzvcs,kxd7282\}@rit.edu}
\alignauthor
Andrew Kane\\
Frank Wm. Tompa\\
	\affaddr{University of Waterloo, Canada}\\
	\email{\{arkane,fwtompa\}@uwaterloo.ca}
}

\maketitle

\begin{abstract}

With the ever-increasing quantity and variety of data worldwide, the Web has become a rich repository of mathematical formulae. This necessitates the creation of robust and scalable systems for Mathematical Information Retrieval, where users search for mathematical information using individual formulae (query-by-expression) or a combination of keywords and formulae. Often, the pages that best satisfy users' information needs contain expressions that only approximately match the query formulae. For users trying to locate or re-find a specific expression, browse for similar formulae, or who are mathematical non-experts, the similarity of formulae depends more on the relative positions of symbols than on deep mathematical semantics.

We propose the Maximum Subtree Similarity (MSS) metric for query-by-expression that produces intuitive rankings of formulae based on their appearance, as represented by the types and relative positions of symbols. Because it is too expensive to apply the metric against all formulae in large collections, we first retrieve expressions using an inverted index over tuples that encode relationships between pairs of symbols, ranking hits using the Dice coefficient. The top-$k$ formulae are then re-ranked using MSS. Our approach obtains state-of-the-art performance on the NTCIR-11 Wikipedia formula retrieval benchmark and is efficient in terms of both index space and overall retrieval time. Retrieval systems for other graphical forms, including chemical diagrams, flowcharts, figures, and tables, may also benefit from adopting our approach.

\end{abstract}

\category{H.2.4}{Database Management}{Systems}[Query Processing]
\category{H.3.3}{Information Search and Retrieval}{Retrieval models}
\category{H.3.4}{Systems and Software}{Performance evaluation (efficiency and effectiveness)}

\keywords{mathematical information retrieval (MIR), inverted index, query-by-expression, subtree similarity}

\section{Introduction}

Mathematical Information Retrieval (MIR) is an important emerging area of Information Retrieval research~\cite{survey2012,Aizawa2013bc,AizKohOun:nmto14,wikipedia}. Technical documents often include a substantial amount of mathematics, but math is difficult to use directly in queries. For the most part, large-scale search engines do not support formula search other than indirectly, e.g., through matching \LaTeX~strings.  Formula queries allow documents with similar expressions or mathematical models to be discovered automatically, providing a new way to search and browse technical literature~\cite{Youssef2006a}.
For mathematical non-experts, querying based on the appearance of expressions may also be useful, for example when students try to interpret unfamiliar notation~\cite{Wangari2014}.
Many have had the experience of wishing they could search through technical documents for similar formulae rather than find words to describe them.

Figure~\ref{fig:q14} shows the top of a results page from the new \emph{Tangent} \cite{Stalnaker2015lq,Pattaniyil2014}  formula retrieval engine.\footnote{\url{http://www.cs.rit.edu/~dprl/Software.html}} The 17 hits shown are grouped by their structure (exact match, variable substitution, operator substitution), and groups are ordered by the similarity of the contained formulae to the query. 
Efficient and effective retrieval becomes more difficult when the best matches are even less similar to the query formula (e.g., the repository includes larger expressions that include pieces similar to one or more parts of the query formula) or when wildcards that can match arbitrary symbols or subexpressions are included in the query~\cite{Kamali:2010:NMR:1871437.1871635}.

\begin{figure}[!tb]

\centering

\fbox{
\scalebox{0.45}{\includegraphics[trim=20 5 10 10,clip]{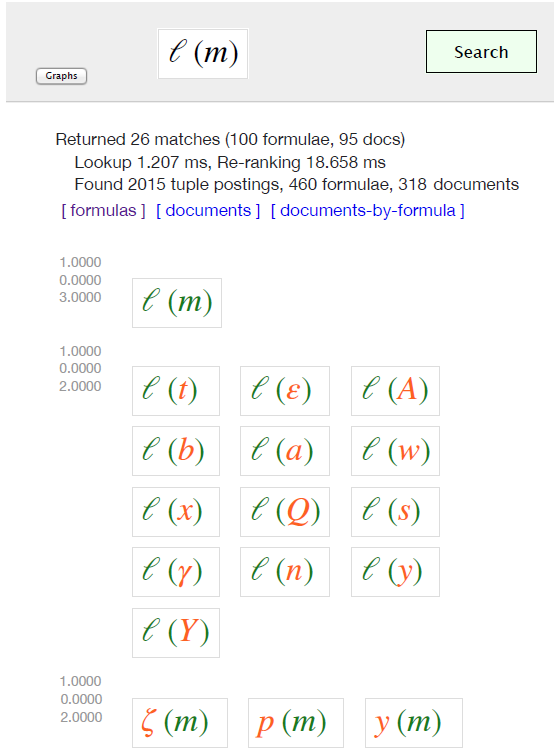}}
}
\caption{Tangent Search Results Page (truncated).} 

\label{fig:q14}
\end{figure}

For scalability, Tangent now employs a two-level cascading search system~\cite{wang2011cascade} that provides both query runtime efficiency and ranking effectiveness for formula search.  The first level is the {\em core engine}, which uses an uncompressed inverted index over tuples representing pairs of symbols in an expression. This level provides limited support for wildcard symbols and can quickly produce an ordered list of candidate results using a simple ranking algorithm.  The second level re-ranks the top candidate results using {\em Maximum Subtree Similarity} (MSS), a new metric for computing the similarity of mathematical formulae based on their appearance. The system architecture is summarized in Figure~\ref{fig:arch}.

\begin{figure*}[!tb]
\centering
\scalebox{0.39}{\includegraphics[trim=45 0 0 45,clip]{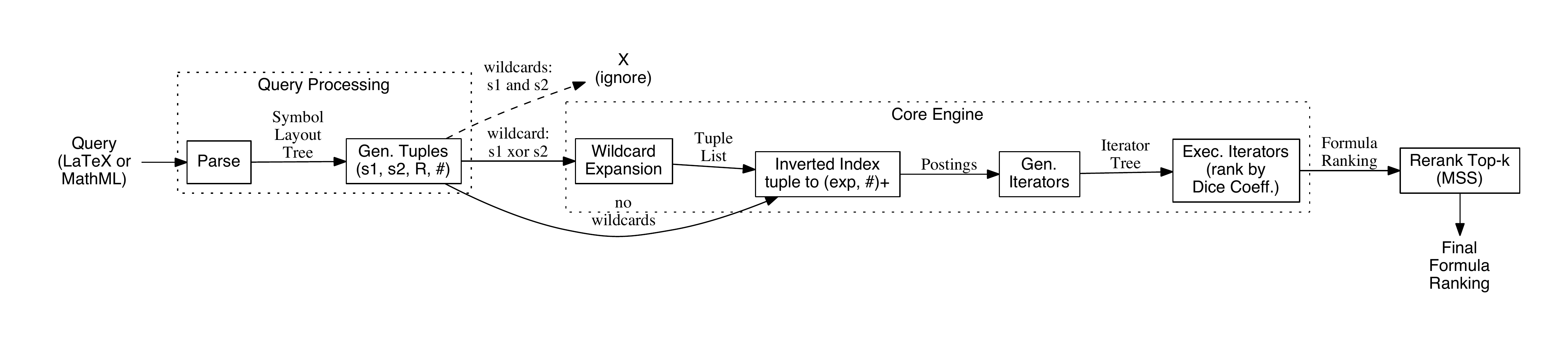}}
\vspace{-0.4in}
\caption{Formula Retrieval in Tangent (version 3)}
\label{fig:arch}
\end{figure*}

{\bf Contributions.} This paper includes three primary contributions. Our first is the incorporation of substantially smaller indices than those used previously\cite{Stalnaker2015lq,Pattaniyil2014} (Section~\ref{sec:coreengine}), which can obtain strong retrieval results in a scalable system. The second contribution is the MSS metric (Section~\ref{sec:similarity}), which produces an intuitive ordering for retrieved formula based on the visual structure of expressions, taking unifiable symbol types into account. The third is a new symbol pair retrieval model (Section \ref{sec:structure}) that incorporates the first two contributions in an efficient and effective two-stage cascaded implementation, as demonstrated experimentally (Section~\ref{sec:experiments}). 
In addition, we believe that the form of output adopted, namely grouping results by similarity and match structure, is an improvement over existing MIR interfaces.

\section{Related Work}

Interest in Mathematical Information Retrieval (MIR) has been increasing in recent years, as witnessed by the NTCIR-10~\cite{Aizawa2013bc} and NTCIR-11~\cite{AizKohOun:nmto14} Math Retrieval Tasks held in 2013 and 2014, respectively.

Math representations are naturally hierarchical, and often represented by trees that may be encoded as text strings. As a result, approaches to query-by-expression may be categorized as \emph{tree-based} or \emph{text-based}, as determined by the structures used to represent formulae.
The encoded hierarchies commonly represent either the arrangement of symbols on writing lines (as in \LaTeX~or Presentation MathML) or the underlying mathematical semantics as nested applications of operations to arguments (as in OpenMath or Content MathML). Both appearance and semantic representations have been used for retrieval.

{\bf Text-Based Approaches.}
In text-based approaches, math expression trees are linearized, and often normalized, before indexing and retrieval. Common normalizations include defining synonyms for symbols (e.g., function names), using canonical orderings for commutative operators and spatial relationships (e.g., to group {\verb a+b } with {\verb b+a } and {\verb x_i^2 } with {\verb x^2_i }), enumerating variables, and replacing symbols by their mathematical type (e.g., numbers, variables, and classes of operators)~\cite{sojka2011, survey2012}.

Although linearization masks significant amounts of structural information, it allows text and math retrieval to be carried out efficiently by a single search engine (commonly Lucene\footnote{\url{https://lucene.apache.org/}}). As a result, most text-based formula retrieval methods use TF-IDF (term frequency-inverse document frequency) retrieval after linearizing expressions~\cite{Miller2003, sojka2011}.
In an alternative approach, the largest common substring between the query formula and each indexed expression is used to retrieve \LaTeX~strings
\cite{Kumar2012a}. This captures more structural information, but also requires evaluating all expressions in the index using a quadratic algorithm.

{\bf Tree-Based Approaches.}
Tree-based formula retrieval approaches use explicit trees to represent expression appearance or semantics directly.
These approaches index complete formula trees, often along with their subexpressions to support partial matching.
Methods have been developed to compress tree indices by storing identical subtrees
uniquely~\cite{Kamali:2010:NMR:1871437.1871635} and to match expressions using tree-edit distances with early stopping for fast retrieval~\cite{kamali2013structural}. The \emph{substitution tree} data structure, first designed for unification of predicates~\cite{graf1995}, has been used to create tree-structured indices for formulae~\cite{kohlhase2006}. Descendants of an index tree node contain expressions that unify with the parameterized expression stored at that node.

A recent tree-based technique adapts TF-IDF retrieval for vectors of subexpressions and generalized subexpressions in which arguments are represented by untyped placeholders~\cite{Lin:2014:MRS:2600428.2609611}. In this method a Symbol Layout Tree is modified to capture some semantic properties, normalizing the order of arguments for commutative operators and representing operator precedences explicitly.

{\bf `Spectral' Tree-Based Approaches.} An emerging sub-class of the tree-based approach uses paths or small subtrees rather than complete subtrees for retrieval. One system converts sub-expressions in operator trees to words representing individual arguments and operator-argument triples~\cite{Nguyen20125820}.
A lattice over the sets of generated words is used to define similarity, and a breadth-first search constructs a neighbor graph traversed during retrieval.  Another system employs an inverted index over paths in operator trees from the root to each operator and operand, using exact matching of paths for retrieval~\cite{Hiroya2013yq}.  The large number of possible unique paths combined with exact matching make this technique brittle.

Rather than indexing paths from the root of the tree, the Tangent math retrieval system stores \emph{relative} positions of symbol pairs in Symbol Layout Trees to create a ``bag of symbol pairs'' representation~\cite{Pattaniyil2014,Stalnaker2015lq}. This symbol pair representation supports partial matches in a flexible way, while preserving enough structural information to return exact matches for queries. Set agreement metrics are applied to the bags of symbol pairs to compute formula similarities. For example, the harmonic mean for the percentage of matched pairs in the query and a candidate (i.e., \emph{Dice's coefficient} for set similarity\footnote{Given a query tree $T_q$ and a candidate tree $T_c$, let $F_q$ and $F_c$, respectively, denote a set of their features (such as a set of node and edge labels) and let $F_{q,c} = F_q \cap F_c$ denote the set of features they have in common. Dice's coefficient of similarity ($\frac{2|F_{q,c}|}{|F_q|+|F_c|}$) can then serve as the {\em score} for $T_c$.}) prefers large matches of the query with few additional symbols in the candidate. Tangent (starting with Version 2) also accommodates matrices, isolated symbols, and wildcard symbols and augments formula search with text search.
Formula retrieval based on bags of symbol pairs combined with keyword retrieval using Lucene allowed Tangent to produce the highest Precision@5 result for the NTCIR-11 Math-2 main retrieval task with combined text and formula queries (92\%)~\cite{AizKohOun:nmto14}.

In this paper, we address needed improvements for Tangent, described in the next section.

\section{Problem Statement}
\label{sec:problem}

The math retrieval task we address is to search a corpus to produce a ranked list of formulae (and the pages on which those formulae are located) that match a query formula expressed in \LaTeX~or Presentation MathML, with or without the inclusion of wildcard symbols. Formulae ranked highly should match the query formula exactly or, failing that, closely resemble it. The system should be scalable in terms of index size, indexing speed, and querying speed.

{\bf Scalability and Retrieval Effectiveness.} As originally implemented, Tangent is not scalable: indexing time is less than 200 formulae per second, producing indices of over 1GB for the NTCIR-11 Wikipedia corpus 
and 30 GB for the NTCIR-11 arXiv corpus.
Retrieval time is also slow, averaging 5 seconds per query for the Wikipedia task (under 400 thousand distinct formulae) and averaging 3 \emph{minutes} per query for the NTCIR main task (3 million distinct formulae). Furthermore, while retrieval effectiveness is very good, there is substantial room for improvement.

\section{Formula Structure Model}
\label{sec:structure}

\subsection{Symbol Layout Tree (SLT)} \label{sec:slt}

{\bf Symbols and Containers.} Tangent uses a Symbol Layout Tree (SLT) to represent the appearance of a mathematical formula. Tree nodes represent individual symbols and visually explicit aggregates, such as fractions, matrices, function arguments, and parenthesized expressions. In Tangent Version 3, all symbols except those representing operators or separators (e.g., commas) are prefixed with their type, represented by a single character followed by an exclamation point. More specifically, SLT nodes represent:

\begin{compactitem}
    \item{typed mathematical symbols: numbers (N!$n$); variable names (V!$v$); text fragments, such as {\it lim}, {\it otherwise}, and {\it such that} (T!$t$)}

    \item{fractions (F!)}
    \item{container objects: radicals (R!); matrices, tabular structures, and parenthesized expressions (M!$f$$r$x$c$)}
    \item{explicitly specified whitespace (W!)}
    \item{wildcard symbols (?$w$)}
        \item{mathematical operators}
\end{compactitem}

Because of their visual similarity, all tabular structures, including matrices, binomial coefficients, and piecewise defined functions are encoded using the matrix indicator M!. If a matrix-like structure is surrounded by fence characters, then those symbols are indicated after the exclamation mark. Finally, the indicator includes a pair of numbers separated by an $x$, indicating the number of rows and the number of columns in the structure. For example, M!2x3 represents a 2x3 table with no surrounding delimiters and M!()1x5 represents a 1x5 table surrounded by parentheses. Importantly, {\em all} parenthesized subexpressions are treated as if they were 1x1 matrices surrounded by parentheses, and, in particular, the arguments for any $n$-ary function are represented as if they were a 1x$n$ matrix surrounded by parentheses.

As well as associating a label (e.g., V!x) with every SLT node, every node has an associated type ({\em number}, {\em variable}, {\em operator}, etc.). A node's type is reflected in its label, usually represented by the part of the label up to an exclamation point (e.g., V!), but node labels preceded by a question mark (?) have type {\em wildcard}; a matrix node's type includes the matrix dimensions, but not its fence characters (e.g., M!2x3); and other node labels without exclamation marks have type {\em operator}.

{\bf Spatial Relationships.} Labeled edges in the SLT capture the spatial relationships
between objects represented by the nodes:
\begin{compactenum}
	\item{{\bf next} ($\rightarrow$) references the adjacent object that appears to the right on the same line}
	\item{{\bf within} ( \fbox{$\cdot$} ) references the radicand of a root or to the first element appearing in row-major order in a structure represented by M!}
	\item{{\bf element} ( $\multimap$ ) references the next element appearing in row-major order in a structure represented by M!}

	   \item{{\bf above} ( $\uparrow$ ) references the leftmost object on a higher line (e.g., superscript, over symbol, fraction numerator, or index for a radical)}
	   \item{{\bf below} ( $\downarrow$ ) references the leftmost object on a lower line (e.g., subscript, under symbol, fraction denominator)}
	   \item{{\bf pre-above} ( $\Uparrow$ ) references the leftmost object of a prescripted superscript}
	   \item{{\bf pre-below} ( $\Downarrow$ ) references the leftmost object of a prescripted subscript}

\end{compactenum}
An SLT is rooted at the leftmost object on the main baseline (writing line) of the formula it represents. Figure~\ref{fig:SLT} shows an example of an SLT, where for simplicity, unlabeled edges represent the {\em next} relationship and types other than {\em wildcard} are not displayed.

\begin{figure}[!tb]

 {
${\displaystyle\qquad\pi_{i}=2^{{\color{red} \fbox{\scriptsize ?x0}}}{\tbinom{N}{i}}}$}

~~~~~\scalebox{0.325}{\includegraphics{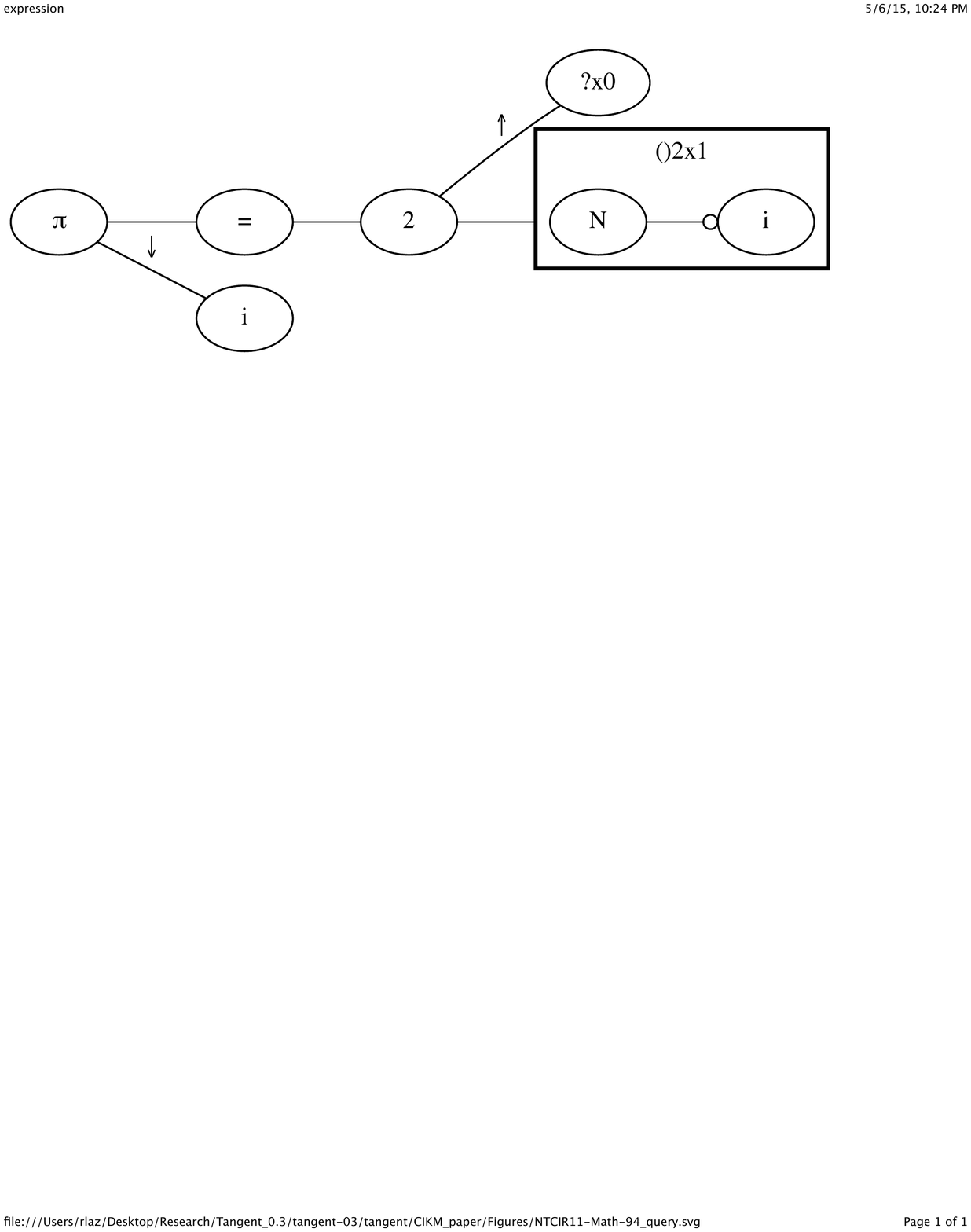}}\\
 ~\\
\small (a) Query Formula and Symbol Layout Tree (SLT)
~\\

\scriptsize
\begin{center}
\begin{tabular}{ c c l @{} c c }
 \sc Sym-1 & \sc Sym-2 & \sc Path &  \sc Count \\
\hline
{\bf V!}$\pi$ & {\bf V!}i & $\downarrow$  & 1\\
{\bf V!}$\pi$ & = & $\rightarrow$   & 1 \\
= & {\bf N!}2 & $\rightarrow$   & 1 \\
{\bf N!}2 & {\bf ?}x0 & $\uparrow$   & 1\\
{\bf N!}2 & {\bf M!}()2x1 & $\rightarrow$   & 1\\
{\bf M!}()2x1 & {\bf V!}N &  \fbox{$\cdot$}   & 1\\
{\bf V!}N &  {\bf V!}i & $\multimap$   & 1  \\

{\bf V!}N & !0 & $\rightarrow$  & 1 \\
{\bf V!}i & !0 & $\rightarrow$  & 2 \\
{\bf M!}()2x1 & !0 & $\rightarrow$  & 1\\
{\bf ?}x0 & !0 & $\rightarrow$  & 1 \\

\hline

{\bf V!}$\pi$ & {\bf N!}2 & $\rightarrow\rightarrow$  & 1 \\
= & {\bf ?}x0 & $\rightarrow\uparrow$  & 1 \\
 \vdots &\vdots & \vdots  & \vdots\\

\hline
{\bf V!}$\pi$ & {\bf V!}i & $\rightarrow\rightarrow\rightarrow$\fbox{$\cdot$}$\multimap$  & 1 \\

\hline
\end{tabular}
\end{center}
(b) Tuples for SLT Symbol Relationships with Counts

${\displaystyle\pi_{i}=2^{{-N}}{\tbinom{N}{i}}}$
\scalebox{0.305}{\includegraphics{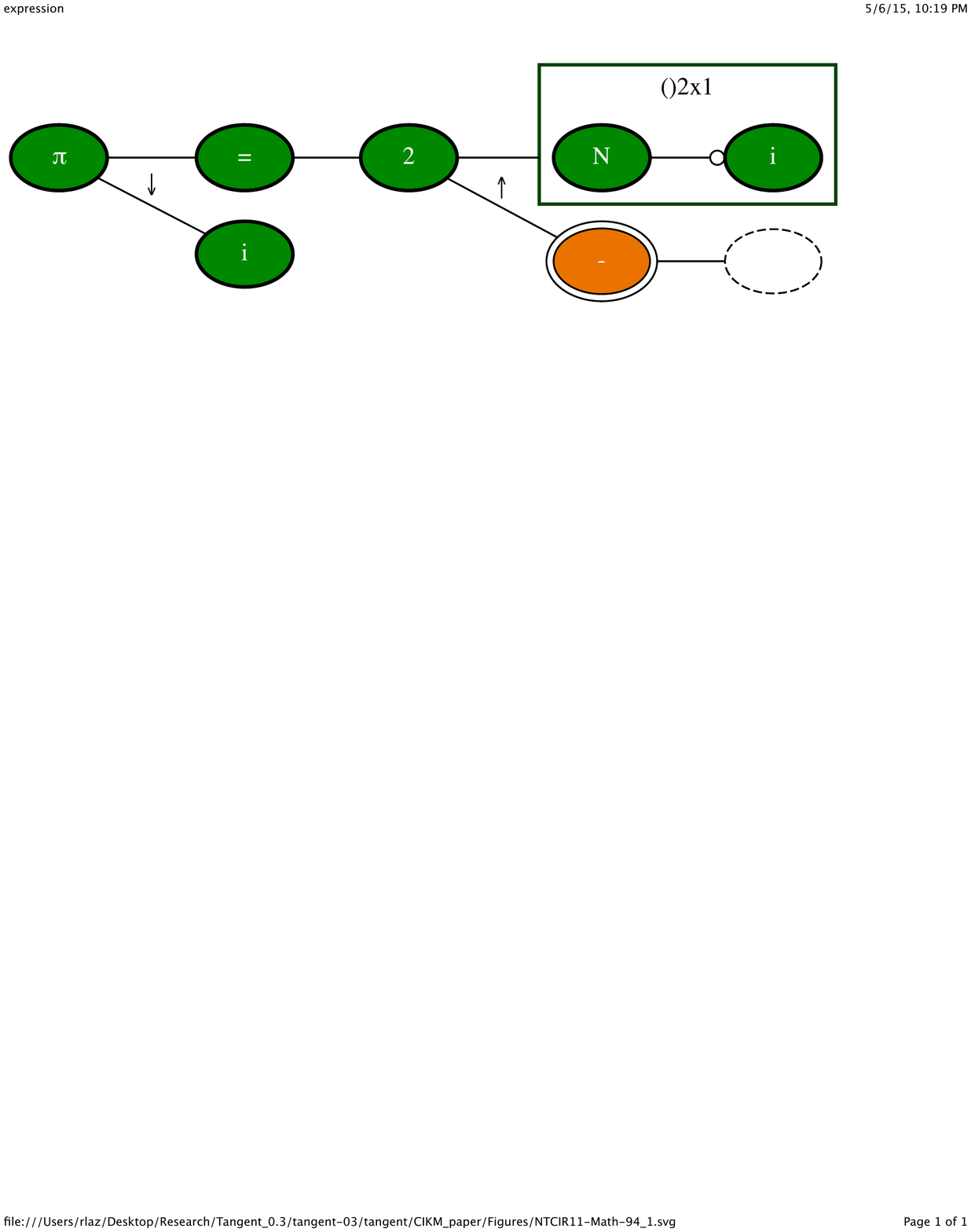}}\\

${ E_{{m,n}}=2^{{n-m}}{n\choose m}}$
\scalebox{0.305}{\includegraphics{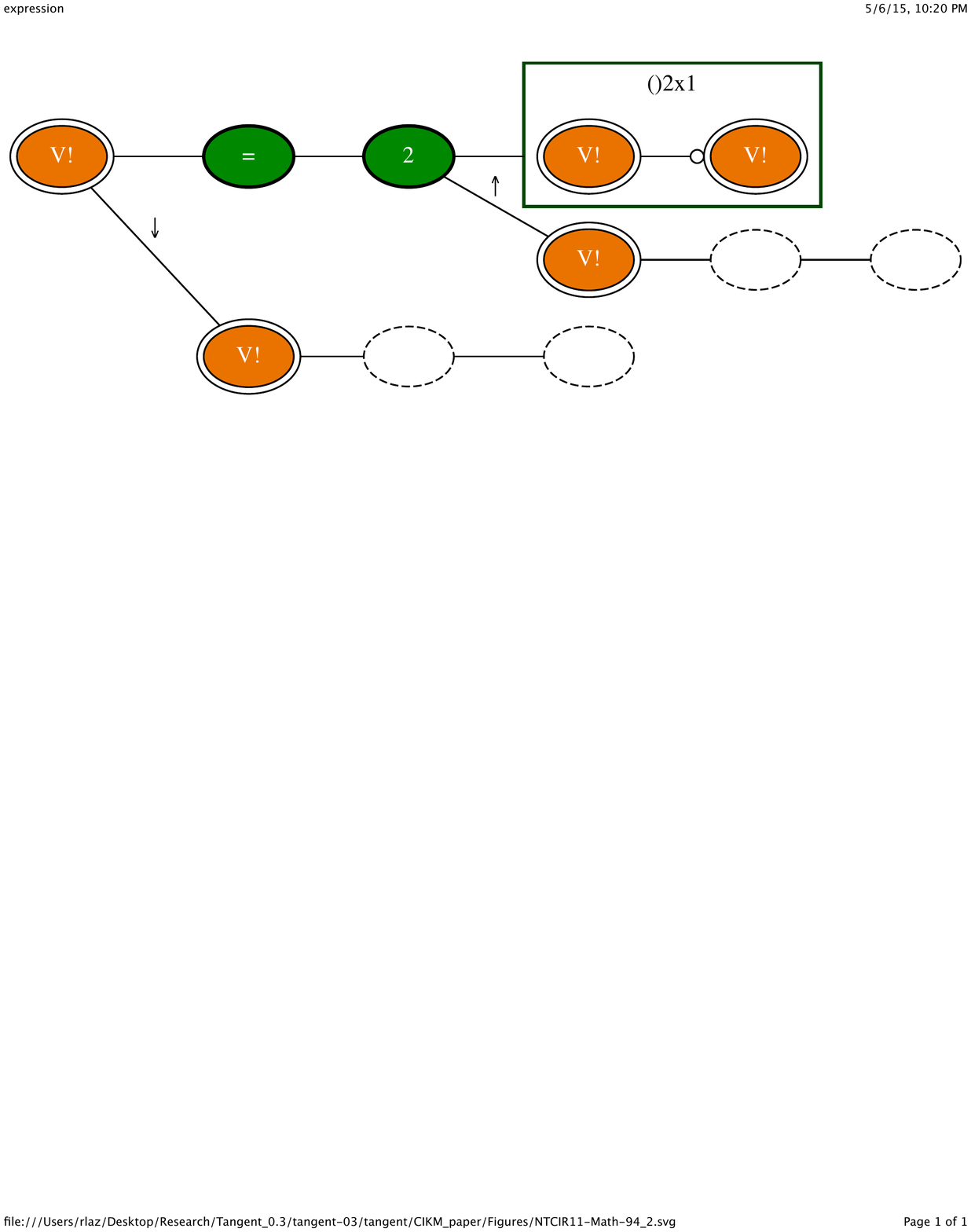}}\\
~\\
(c) Example Search Hits. Green: exact matches, Orange: unified matches, Dashed Nodes: unmatched symbols

\caption{Query Formula with Corresponding Symbol Layout Tree (SLT), Symbol Pair Tuples,
and Sample Search Results.}
\label{fig:SLT}
\end{figure}

{\bf Creating SLTs.} SLTs can be created straightforwardly from Presentational MathML by a recursive descent parser. For other input formats, we assume that converters such as LaTeXML\footnote{\url{http://dlmf.nist.gov/LaTeXML/}} exist to produce Presentational MathML.

In most circumstances, whitespace is not represented in an SLT. As a result, although unicode whitespace and related characters, such as ``invisible times'' ({\tt \small U+2062}), occasionally appear as operators in Presentational MathML expressions, they are all ignored for the purpose of matching expressions in Tangent.

\subsection{SLT Tuple Representation} \label{sec:tuples}

As described above, a node in a Symbol Layout Tree can have up to seven labeled outgoing edges (with no edge label repeating for any node).
For a given Symbol Layout Tree, Tangent produces a set of tuples that each encodes the relationship between a pair of symbols occurring on some path from the root to a leaf.
Given two nodes on such a path, we define the relative path between the nodes by the sequence of edge labels traversed from the ancestor node to the descendant.

As an optimization that saves both space and time, and following the practice of searching via n-grams~\cite{Willett79},
the new version of Tangent does not store all tuples defined, but only those for which the distance between symbols
(measured by the number of edges separating them) is less than or equal to a specified window size $w$.

In addition to normal tuples for an SLT, end-of-line (EOL) information can optionally be captured by introducing
special tuples of the form $(last~symbol, !0, \rightarrow)$.  Such end-of-line tuples are likely to improve retrieval, particularly for
individual symbols and small expressions.

Tuple information for the example expression in Figure~\ref{fig:SLT} along with tuple counts is shown
in Figure~\ref{fig:SLT}b. 
The maximum
path length between two symbols is five, yielding 23 distinct tuples (19 symbol pair tuples plus four EOL tuples with one repetition for `i'). However, if the window size $w$ is set to 2, then only 16 distinct symbol pairs are stored. 

\section{Architectural Overview}

In order to improve both query runtime efficiency and ranking effectiveness, our search system uses a two level cascading
approach~\cite{wang2011cascade}.  After parsing a query formula, the first level searches the corpus and returns candidate results. The second level then re-ranks those candidates, and finally the results are displayed on an HTML page using grouping and color coding of match structures for improved clarity.

The first level of our search system, referred to as the {\it core engine}, uses an inverted index over tuples defined
from symbol pairs of the Symbol Layout Trees of expressions.  The core engine supports limited query functionality
in order to produce a small list of candidate expression results quickly using the simple ranking metric defined
by Dice's coefficient over tuple matches.  These candidate expressions are returned together with the lists of documents that
contain each expression.

The second level of our system is a re-ranker that implements the full query functionality to identify tuple matches in expressions.
The re-ranker scores the candidate expression results using the more accurate Maximum Subtree Similarity ranking metric, as defined
in Section~\ref{sec:similarity}.  The re-ranker then combines several expression scores in the candidate expression results to produce a final document ranking.

The final query results can be ordered by either formula rank or document rank.  When the results
are presented in formula rank order, they are grouped by their maximum query subtree overlaps as shown in
Figure~\ref{fig:q14}; when presented in document rank order they are grouped by document.
In either case, expressions are displayed color coded to highlight their maximum subtree overlaps, as shown in Figure~\ref{fig:SLT}c.

\section{Core Engine} \label{sec:coreengine}

The core engine for our system quickly finds a small number of highly relevant candidate results for a math
search query, which are later re-ranked.
The engine returns these top-$k$ formulae determined
using a simple ranking algorithm, along with the list of documents containing each formula and the first position
of that formula in the document.

Since runtime performance is a high priority, the core engine uses a customized inverted index data structure
implemented in C++.  In addition, the engine evaluates only a subset of the query language functionality to allow
the use of a fast and simple ranking algorithm that can still find a good set of candidate results.

The input to the indexer is a set of document names and the extracted mathematical formulae found in each document:
$\{\mathit{document},\mathit{formula}^+\}^*$, and the input to the search component is a single query formula. Each formula is converted to a set of tuples
(see Section~\ref{sec:tuples}) that serve as words do in a normal search engine.

\miniheading{Index Structures:}
At index time, an inverted index is built over the given
document-formula-tuple relationships. The index includes postings lists \emph{PL1} that map each tuple to all formulae containing that tuple.
A query containing only non-wildcard tuples can thus be implemented by combining
the corresponding tuples' postings lists using an OR operator.  We store these postings lists as ordered lists
of formula identifiers (integers), so that the lists can be easily combined using a merge algorithm.
The engine uses a dictionary \emph{D1}
to always assign the same formula to the same identifier, thus saving both space and time in the engine.

In order to return document information for query results, the engine stores
postings lists \emph{PL2} mapping each formula identifier to the identifiers of the documents
containing those formulae, along with their
first position in the document.  To improve compression, a dictionary \emph{D2} is used for document names and another
dictionary \emph{D3} is used for tuples.

The core engine supports limited wildcard functionality.  Query
tuples containing a single wildcard as either the ancestor or descendant symbol are implemented as iterator
expansions.  The engine stores postings lists \emph{PL3} that map each single wildcard tuple to the set of
tuple identifiers that match.  Assigning tuple identifiers using a dictionary \emph{D4} again gives some compression benefits.  Implementing even this restricted wildcard functionality can be expensive, since the iterator expansion could be quite large.\footnote{The engine does not try to enforce wildcard variable agreement between tuples (wildcard joins), and it ignores multi-wildcard tuples.
An initial implementation handling multi-wildcard tuples and wildcard joins was found to be approximately a hundred times slower than the current engine for a small dataset.}

In summary, the core engine uses two main data structures: \emph{dictionaries} convert objects (such as strings or tuples) into a
compact 0-based range of internal identifiers (integers) and \emph{postings lists} are lists of integer tuples ordered by
the first integer in the tuple.
\begin{compactitem}
\item dictionaries
\begin{compactitem}
\item[\emph{D1}:] formula $\rightarrow$ formID
\item[\emph{D2}:] document $\rightarrow$ docID
\item[\emph{D3}:] tuple $\rightarrow$ tupleID
\item[\emph{D4}:] wildcardtuple $\rightarrow$ wildcardtupleID
\end{compactitem}
\item postings lists
\begin{compactitem}
\item[\emph{PL1}:] tupleID $\rightarrow$ $($formID, count$)^+$
\item[\emph{PL2}:] formID $\rightarrow$ $($docID, position$)^+$
\item[\emph{PL3}:] wildcardtupleID $\rightarrow$ $($tupleID$)^+$
\end{compactitem}
\end{compactitem}
These data structures can be combined to produce compression, ease of storage, and fast
access speeds.  

\miniheading{Searching:}
Query processing follows the architecture shown in Figure~\ref{fig:arch}.  First, the query is parsed into an SLT and tuples
are extracted.  Then wildcard tuples are expanded, the associated postings lists for each tuple are found,
iterators over these lists are created, and an iterator tree that implements the query is formed.  Next, the
iterator tree is advanced along formula identifiers in order, the scores are calculated, and the top-$k$
formulae are stored in a heap.  During this process, non-wildcard iterators are advanced first so that wildcard iterators
only match unallocated tuples.
As optimizations, iterators may skip over some formulae based on thresholds and
max-score calculations ({\it see below}).  After the iterators are finished, matching formulae and scores are returned
along with the associated document names.

The engine uses Dice's coefficient over tuples as a simple ranking algorithm, counting the number of tuples that overlap
between the query and a candidate formula using the query iterators.  The engine also stores the tuple
count for each formula in an array \emph{A1} and uses these values in the ranking calculation:
\begin{compactitem}
\begin{compactitem}
\item[\emph{A1}:] formID $\rightarrow$ tuplecount
\end{compactitem}
\end{compactitem}
Since wildcards can often match multiple tuples in a query and overlap with other wildcards, there could be
multiple ways to count the tuples that overlap.  The
engine implements a greedy counting approach by simply assigning the matches for tuples when each of the
iterators is advanced.

\miniheading{Parameters:}
The engine has three configuration parameters: the window size $w$ for formula-to-tuple conversion, the optional use of
end-of-line tuples, and the number of formulae $k$ to return for each query.  The runtime efficiency and
ranking effectiveness of configurations using various settings of the first two parameters with $k=100$ are examined
in Section~\ref{sec:experiments}.

\miniheading{Optimizations:}
By using dictionaries and postings lists, the engine's data structures are small enough to be run in memory for the datasets
being examined, so we do not examine additional techniques to compress these data structures here.  Nevertheless, query processing might still be slow, even though
the data structures are in memory, the ranking algorithm is fast, and the use of a dictionary avoids repeated processing of duplicate
formulae.  As a result, various techniques are employed to improve query execution time:\\

\begin{compactitem}
\item[\emph{O1}:] Avoid processing all postings by allowing skipping in query iterators.  This functionality is implemented
using doubling (galloping) search~\cite{bentley1976almost}.
\item[\emph{O2}:] Skip formulae based on size thresholds.  We use the current top-$k$ candidate list to define a
minimum score that defines minimum and maximum tuple size thresholds from the definition of Dice's coefficient.
We also improve on the effectiveness of these thresholds by reordering formula identifiers: sort the formulae by size,
split into quartiles $\{q_1,q_2,q_3,q_4\}$, and then reorder $\{q_2,\mathit{reverse}(q_1),q_3,q_4\}$.
\item[\emph{O3}:] Avoid formulae that match only wildcard tuples when the score threshold allows.  This is similar
to portions of the max-score~\cite{turtle1995query} optimization, only at a coarser granularity.
\item[\emph{O4}:] Avoid processing all wildcard tuple expansions.  If a tuple is matched to a wildcard for
the next formula, do not process the remaining iterators for this wildcard.
\item[\emph{O5}:] Process iterators for large postings lists first.  Evaluate the binary operator tree left-first and
order tree operators descending by size when possible.
\end{compactitem}
\vspace{0.1in}

Various improvements to the engine have been left for future work, including compression of the postings lists and
implementing more of the query functionality in the engine.  Additional improvements in query runtime are also
possible by using an implementation of weak-AND~\cite{broder2003efficient} or a more fine-grained implementation
of max-score~\cite{turtle1995query}.

\section{Reranking by Maximum Subtree Similarity} \label{sec:similarity}

Effective information retrieval depends on ranking documents based on their similarity to a user's query. For example, when using tree-based formula retrieval, one could extract a set of features from a query tree $T_q$ and each candidate indexed tree $T_{c_i}$, apply Dice's coefficient of similarity as the score for $T_{c_i}$, and rank candidates by their scores. In this section we describe an alternative to Dice's metric that is particularly effective in ranking mathematical formulae.

\begin{notation}
The label on node $n$ in SLT $T$ is denoted $\lambda(n)$. The number of nodes in SLT $T$ is denoted $|T|$. For simplicity, we write $n \in T$ if $n$ is a node in $T$ and $(n_1,n_2) \in T$ if $(n_1,n_2)$ is an edge in $T$.
\end{notation}

Approximate matches of formulae might involve isolating corresponding parts of SLTs representing a query and a candidate match. Therefore we need a basis for describing such a correspondence.

\begin{definition}{aligned SLTs}
SLTs $T_1$ and $T_2$ are {\em aligned} if there is an isomorphism $f$ mapping nodes from $T_1$ onto nodes from $T_2$ such that for every edge $(n_a,n_b) \in T_1$, there is a corresponding edge $(f(n_a),f(n_b)) \in T_2$ that has the same label. (Note that {\em node} labels in aligned trees need not match.) For $N$ a subset of nodes in $T_1$, we define $f(N) = \{f(n) \;|\; n \in N\}$.
\end{definition}

Approximate matches might also involve simple replacements of symbols in one SLT by alternative symbols (e.g., $x$ for $y$ or 3 for 2). Naturally, a wildcard symbol can be replaced by any symbol.

\begin{definition}{unified nodes}
Node $n_1$ in SLT $T_1$ can be {\em unified} with node $n_2$ in SLT $T_2$, denoted $n_1 \dashrightarrow n_2$, if one of the following conditions holds:
\begin{compactitem}
\item{Both $n_1$ and $n_2$ have type {\em variable name} (V!),}
\item{Both $n_1$ and $n_2$ have type {\em number} (N!),}
\item{$n_1$ has type {\em wildcard} (?), or}
\item{$n_1$ has a type other than variable name, number, or wildcard and $\lambda(n_1) = \lambda(n_2)$.}
\end{compactitem}
\end{definition}

However the SLT for an arbitrary query formula will not necessarily align with the SLT for an arbitrary candidate match formula. Therefore, we need to consider parts of the SLTs that can be aligned. When considering many candidate match trees, we are most interested in those parts of the query and candidate trees that are similar to the tree representing the whole query.

\begin{definition}{maximally similar subtree}
Given SLTs $T_q$ and $T_c$ and aligned SLTs $T_1$ and $T_2$ with isomorphism $f$ from $T_1$ to $T_2$, where $T_1$ is a pruned subtree\footnote{Given a tree $T$, a {\em pruned subtree} is any connected subset of nodes from $T$ together with the edges connecting those nodes. Thus a pruned subtree is itself a tree, but it need not extend to the leaves of $T$. Henceforth, we will use ``subtree'' to mean ``pruned subtree.''} of $T_q$ and $T_2$ is a pruned subtree of $T_c$, let $m = |\{n \in T_1 \;|\; n \dashrightarrow f(n)\}|$. Let $\frac{2m}{|T_1|+|T_q|}$ be a measure of similarity of $T_1$ to $T_q$ with respect to $T_2$ (Dice's measure). $T_1$ is then {\em maximally similar} to $T_q$ if the root of $T_1$ can be unified with the root of $T_2$ and there is no other pair of aligned SLTs $T'_1$ and $T'_2$ with corresponding measure $m'$, where $T'_1$ is a subtree of $T_q$, $T'_2$ is a subtree of $T_c$, the root of $T'_1$ is the same as the root of $T_1$, the root of $T'_2$ is the same as the root of $T_2$, $|T'_1| > |T_1|$, and $\frac{2m'}{|T'_1|+|T_q|} > \frac{2m}{|T_1|+|T_q|}$.
\end{definition}

\begin{theorem}
Given SLTs $T_q$ and $T_c$ and aligned SLTs $T_1$ and $T_2$ with isomorphism $f$ from $T_1$ to $T_2$, where $T_1$ is a subtree of $T_q$ and $T_2$ is a subtree of $T_c$, determining that $T_1$ is maximally similar to $T_q$ can be performed in time $O(|T_q|)$.
\end{theorem}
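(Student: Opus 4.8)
The plan is to exploit the rigidity of SLT alignments. Because an SLT node has at most one outgoing edge of each of the seven labels, an edge-label-preserving isomorphism is forced once it sends the root $r_q$ of $T_1$ to the root $r_c$ of $T_2$: along every edge there is a unique way to continue. Hence there is a unique maximal pair of (pruned) subtrees rooted at $(r_q,r_c)$ that can be aligned, and every competitor $T'_1$ appearing in the definition of maximal similarity is a connected subtree of its $T_q$-side, call it $M$, that contains $r_q$; so is $T_1$ itself, with $f$ agreeing on $T_1$ with the forced isomorphism. First I would compute $M$ by one lock-step descent of $T_q$ from $r_q$ and of $T_c$ from $r_c$, recursing on a child pair only when the two edge labels agree and halting otherwise. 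Since the $T_q$-coordinates of the nodes of $M$ are distinct descendants of $r_q$, $|M|\le|T_q|$, so this descent --- together with tagging each node with whether it unifies with its partner ($O(1)$ per node by the unification rules) and reading off $|T_1|$ and $m$ --- costs $O(|T_q|)$. If $r_q$ and $r_c$ do not unify, report ``not maximally similar'' and stop; likewise if $M=T_1$, report ``maximally similar'', since no larger subtree exists.

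Next I would restate the condition combinatorially. With $d=|T_q|$, a connected subtree $S\subseteq M$ containing $r_q$ with $u(S)$ unifiable nodes has Dice measure $2u(S)/(|S|+d)$; because $r_q$ unifies we have $u(S)\ge1$ and hence $u(S)<|S|+d$, from which a one-line computation shows that adjoining to $S$ a unifiable node whose parent already lies in $S$ strictly increases this measure while adjoining a non-unifiable node strictly decreases it. Let $u^{*}(s)$ be the maximum of $u(S)$ over size-$s$ connected subtrees $S\subseteq M$ rooted at $r_q$; then the best Dice measure attainable at size $s$ is $2u^{*}(s)/(s+d)$, and the definition of maximal similarity is equivalent to: $r_q$ unifies with $r_c$, and for every $s>|T_1|$ one has $\frac{2u^{*}(s)}{s+d}\le\frac{2m}{|T_1|+d}$. (If $T_1$ is not even size-optimal, i.e.\ $u^{*}(|T_1|)>m$, then $s=|T_1|+1$ already violates this, so that corner case is handled automatically.)

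The remaining task --- and the step I expect to be the main obstacle --- is to check this family of inequalities in $O(|T_q|)$ time. Writing $\mu(j)$ for the minimum size of a connected subtree of $M$ rooted at $r_q$ with $j$ unifiable nodes, it suffices to know the profile $\mu(\cdot)$ (equivalently $u^{*}(\cdot)$), after which a single scan comparing $\frac{2j}{\max(|T_1|+1,\mu(j))+d}$ --- the best Dice measure among subtrees of size exceeding $|T_1|$ that contain $j$ unifiable nodes, padding when $\mu(j)\le|T_1|$ --- against $\frac{2m}{|T_1|+d}$ decides the question in $O(|M|)$ further time. The difficulty is that a plain one-pass tree DP yields only the single Dice-optimal subtree, which does not settle the matter because the witnessing competitor can be \emph{smaller} than $T_1$, whereas the textbook size-indexed tree DP that computes the whole profile is quadratic. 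My plan is to compute $\mu(\cdot)$ by one bottom-up traversal of $M$ that uses the special structure here --- every node has unit size, every ``profit'' is $0$ or $1$, and the cost of reaching a unifiable node is incurred entirely along its $r_q$-path so that subtree contributions combine additively --- and to show that the resulting profile has only $O(|M|)$ breakpoints and that the bottom-up merges amortize to linear total work rather than the generic quadratic. Establishing that amortization --- equivalently, that only a linear-size summary of the profile is ever relevant --- is the crux; granting it, the whole procedure is a bounded number of linear sweeps over a structure of size at most $|T_q|$, giving the claimed $O(|T_q|)$ bound.
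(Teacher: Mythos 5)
Your setup is sound and agrees with the paper's starting point: because no two edges leaving an SLT node share a label, the alignment is forced once the roots are paired, so every competitor $T'_1$ lives inside a single maximal alignable subtree $M$ computable by one lock-step descent, with $O(1)$ unifiability tests per node. But the proposal is not a proof: you explicitly defer the one step that carries all of the difficulty, namely computing the size-indexed profile $\mu(\cdot)$ (equivalently $u^{*}(\cdot)$) in linear time, and you offer no argument for the hoped-for amortization beyond the assertion that the structure is special. That step is exactly the profile version of connected-subtree knapsack: sibling subtrees' profiles must be combined by a $(\max,+)$ convolution, which is where the generic quadratic cost lives, and none of the features you cite (unit sizes, $0/1$ profits, bounded outdegree) obviously removes it --- the profile is monotone with unit increments, so it has a linear-size description, but it need not be concave (a long non-unifiable path leading to a cluster of unifiable nodes makes it convex), so the known linear-time concave-convolution tricks do not apply. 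As written, your route establishes at best an $O(|T_q|^2)$ bound, and the ``crux'' you grant yourself may simply be false.

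The paper never touches the profile. It constructs a single candidate for the maximally similar subtree rooted at $r$ by one bottom-up recursion: for each child of $r$ reachable under the forced alignment, recursively build the maximally similar subtree $t_i$ rooted at that child together with its match count $m_i$, and include $t_i$ if and only if $\frac{2m_i}{|t_i|+1+|T_q|} > \frac{2}{1+|T_q|}$; since the outdegree is bounded by the seven edge labels, each node costs $O(1)$, and $T_1$ is declared maximally similar iff the roots unify and $T_1$ equals the constructed tree. The load-bearing claim there --- for which your approach would need a substitute --- is that each child subtree's inclusion can be decided \emph{independently} against the fixed baseline $\frac{2}{1+|T_q|}$, which the paper justifies by the fact that the Dice denominator is anchored by the constant term $|T_q|$. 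If you want to salvage your argument, the productive move is to prove (or refute) that independence/greedy-decomposition claim directly, rather than to compute the full size profile, which is strictly more information than the theorem requires.
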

\begin{proof}
Let $r$ be the root of $T_1$. The maximally similar subtree to $T_q$ and rooted at $r$ can be determined in time $O|T_q|$: \\
{\em base case:} If $r$ is a leaf of $T_q$, then $|T_1| = |T_2| = 1$ and $T_1$ is maximally similar to $T_q$ iff $r \dashrightarrow f(r)$. This can be determined in $O(1)$ time.\\
{\em recursion:} Let $\mathcal{T} = \{t_i \;|\; t_i$ is a maximally similar subtree of $T_q$, the root of $t_i$ is a child of $r$, the root of the aligned SLT for $t_i$ is a child of $f(r)$, and the label on the edge from $r$ to the root of $t_i$ is the same as the label on the edge from $f(r)$ to the root of the aligned SLT for $t_i\}$. (This construction is unambiguous because the edge labels on all edges starting at a node are unique.) The isomorphism $f$ can then be extended to include all the nodes in all subtrees in $\mathcal{T}$, and the subtree of $T_q$ consisting of $r$ and all the subtrees in $\mathcal{T}$ will be aligned with the subtree of $T_c$ having nodes $\{f(n) \;|\; n = r \vee n \in t_i \wedge t_i \in \mathcal{T}\}$. Let $m_i = |\{n \in t_i \;|\; n \dashrightarrow f(n)\}|$. The maximally similar subtree to $T_q$ and rooted at $r$ then includes $\{n \;|\; n = r \vee \exists t_i \in \mathcal{T} (\frac{2m_i}{|t_i|+1+|T_q|} > \frac{2}{1+|T_q|} \wedge n \in t_i)\}$ iff $r \dashrightarrow f(r)$ (i.e., we can evaluate the similarity for each subtree independently to determine whether or not it is part of the maximally similar subtree). Because the outdegree of $r$ is bounded by a constant, each step of the recursion can be performed in $O(1)$ time.\\
$T_1$ is then maximally similar to $T_q$ iff it is the tree thus constructed and $r \dashrightarrow f(r)$, and the construction can be performed in time $O(|T_q|)$.
\end{proof}

Next, when matching with substituted symbols, it is important that the substitutions are consistent when determining that two formulae match approximately.

\begin{definition}{alignment partition}
Given $T_1$ and $T_2$, two aligned SLTs with isomorphism $f$ from $T_1$ to $T_2$, an {\em alignment partition} is a subset of nodes $N$ in $T_1$ such that
$(x \in N \wedge y \in N) \Rightarrow (\lambda(x) = \lambda(y) \wedge \lambda(f(x)) = \lambda(f(y)) \wedge x \dashrightarrow f(x))$.
For node $n \in T_1$, we define $P(n)$ to be the alignment partition containing $n$ if it exists and $\emptyset$ otherwise. (Note that $n \in P(n) \Leftrightarrow n \dashrightarrow f(n)$.) For alignment partition $A$, $\lambda(A)$ denotes the label that is common to all nodes in $A$ and $\lambda(f(A))$ denotes the label that is common to all nodes in $f(A)$.
\end{definition}

\begin{definition}{matched set of nodes}
Given aligned SLTs $T_1$ and $T_2$ with isomorphism $f$ from $T_1$ to $T_2$ and the set of all corresponding alignment partitions, we define a {\em matched set of nodes} $M$ as
\begin{eqnarray*}
\lefteqn{M = \{ n \in T_1 \;|\; n \in P(n) \wedge \forall n' \in M }\\
& & ([\lambda(n')=\lambda(n) \vee \lambda(f(n'))=\lambda(f(n))] \Rightarrow n' \in P(n))\}
\end{eqnarray*}
In preparation to preferring matches of large {\em connected} parts of SLTs, let $E(M) = \{(n_1,n_2) \;|\; n_1 \in M \wedge n_2 \in M \wedge (n_1,n_2) \in T_1\}$.
\end{definition}

We need to accommodate situations in which symbol $x$ in the query formula is replaced by symbol $y$ in some parts of a candidate matching formula and by other symbols elsewhere, and where superfluous instances of $x$ or $y$ might appear in the candidate match.
We suggest the following properties for a scoring function, as illustrated in Figure~\ref{fig:MSSrank}: alignments with more matched symbols in close proximity to each other score higher than those with fewer matched symbols or more disconnected matches; if two candidates score equally with respect to matched symbols and their proximity, the one with fewer superfluous symbols scores higher; and everything else being equal, alignments with identical node labels score higher than alignments with distinct node labels that can be unified. Tangent uses such a scoring function:

\begin{figure}
\small
\scalebox{0.35}{\includegraphics[trim=0 15 0 0,clip]{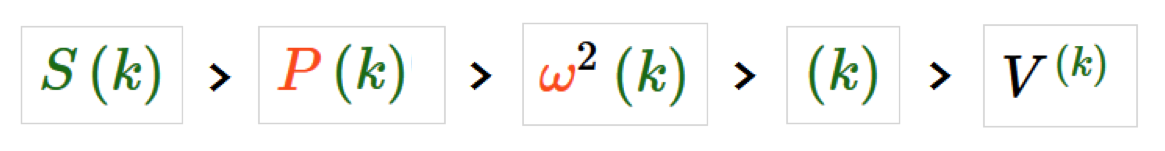}}
\mbox{~~}(1, 0, 3) \mbox{~~~~~} (1, 0, 2) \mbox{~~~~~~} (1, -1, 2) \mbox{~~~~} (0.6, 0, 2)
\mbox{~} (0.6, -1, 2)

\caption{Maximum Subtree Similarity Scoring for Query $S(k)$.}
\label{fig:MSSrank}
\end{figure}

\begin{definition}{SLT score}
Given a query SLT $T_q$, an SLT $T_c$ for a candidate match, and two aligned SLTs $T_1$ and $T_2$ where $T_1$ is a subtree of $T_q$ and $T_2$ is a subtree of $T_c$, let $M$ be a matched set of nodes for $T_1$ and $T_2$. The {\em score} of $T_c$ with respect to $T_q$, $T_1$, $T_2$, and $M$ is denoted $s(T_q,T_c;T_1,T_2,M)$ and defined as the triple composed of the following parts:
\begin{compactenum}
\item{the harmonic mean of the fraction of nodes from $T_q$ preserved by $M$ and the fraction of edges preserved by $E(M)$, i.e., $h_s = \frac{2}{\frac{|T_q|}{|M|} + \frac{|T_q|-1}{max(|E(M)|,0.5)}}$ if $|M| > 0$, otherwise 0.}
\item{the negation of the number of unmatched nodes in $T_c$, i.e., $|M| - |T_c|$.}
\item{the number of nodes that match exactly, i.e., $|\{n \in M \;|\; \lambda(n) = \lambda(f(n))\}|$.}
\end{compactenum}
The scores (triples) assigned to any two candidate matches can be computed in $O(1)$ time if $T_1$, $T_2$, and $M$ are given, and they can be compared lexicographically to determine which candidate ranks higher.
\end{definition}

For aligned SLTs $T_1$ and $T_2$ with isomorphism $f$ from $T_1$ to $T_2$ and the set of all corresponding alignment partitions, we would like to choose a matched set of nodes $M$ that produces a high score, but evaluating all matched sets induced by an alignment is too expensive. Therefore we use a greedy algorithm to select which partitions to include in the matched set of nodes, based on the properties we use for scoring:
\begin{compactenum}
\item{Let $A_0$ be the alignment partition that contains the most nodes; or if more than one partition has the most nodes, then let $A_0$ be one of those partitions for which $\lambda(A_0) = \lambda(f(A_0))$ if it exists; otherwise let $A_0$ be any of the largest alignment partitions. Initialize $M$ to include all nodes in $A_0$.}
\item{Repeatedly identify the largest alignment partition $A_i$ such that $\lambda(A_i)$ is not the label of any node in $M$ and $\lambda(f(A_i))$ is not the label of any node unified with a node in $M$, choosing $A_i$ to be one where $\lambda(A_i) = \lambda(f(A_i))$ if it exists; replace $M$ by $M \cup A_i$.}
\item{Stop when no more alignment partitions can be included in $M$.}
\end{compactenum}
If hash tables are used to record which node labels have been included in $M$ and in $f(M)$, checking for duplicate labels can be performed in $O(1)$ time. Partitions can be considered one by one in decreasing order of size, which requires $O(|T_q|\log(|T_q|))$ time to initialize and then $O(|T_q|)$ to enumerate since the number of partitions cannot exceed the number of nodes in $T_q$.

Finally, to compare a query SLT $T_q$ against a candidate SLT $T_c$, we choose a pair of aligned subtrees that maximizes the score for the candidate with respect to the query.

\begin{definition}{Maximum Subtree Similarity}
Given SLTs $T_q$ and $T_c$, consider pairs of aligned subtrees $T_{i_1}$ and $T_{i_2}$ as follows.
\begin{compactitem}
\item{The root of $T_{i_1}$ can be unified with the root of $T_{i_2}$.}
\item{$T_{i_1}$ is maximally similar to $T_q$.}
\end{compactitem}
The {\em Maximum Subtree Similarity} score MSS$(T_q,T_c)$ of $T_c$ with respect to $T_q$ is $\displaystyle \max_i s(T_q,T_c;T_{i_1},T_{i_2},M_i)$ over all such pairs, where $M_i$ is a greedily chosen matched set of nodes.
\end{definition}

\begin{theorem}
Computing Maximum Subtree Similarity for a candidate formula requires time $O(|T_c||T_q|^2\log(|T_q|))$.
\end{theorem}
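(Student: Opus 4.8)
The plan is to bound the running time of the natural algorithm: enumerate every pair of aligned subtrees over which Maximum Subtree Similarity maximizes, evaluate the score of each, and return the lexicographic maximum. The enabling observation is that a qualifying pair $(T_{i_1},T_{i_2})$ together with its isomorphism $f$ is determined by the single choice of root correspondence $r \mapsto r'$, where $r$ is the root of $T_{i_1}$ and $r' = f(r)$ is the root of $T_{i_2}$: because the labels on the outgoing edges of any SLT node are distinct, fixing $r \mapsto r'$ forces the image of every child reached along an edge with a prescribed label, so the entire largest aligned region below $(r,r')$ is pinned down, and inside it the maximally similar subtree rooted at $r$ is exactly the one the constructive proof of Theorem~1 builds. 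Hence it suffices to iterate over the at most $|T_q| \cdot |T_c|$ pairs $(r,r') \in T_q \times T_c$ with $r \dashrightarrow r'$; no other pair can serve as the roots of a pair that Maximum Subtree Similarity considers.

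For a fixed pair $(r,r')$ I would proceed in four steps. First, invoke the construction in the proof of Theorem~1 to produce $T_{i_1}$, its aligned counterpart $T_{i_2}$, and $f$, in $O(|T_q|)$ time. Second, bucket the nodes $n \in T_{i_1}$ with $n \dashrightarrow f(n)$ by the key $(\lambda(n),\lambda(f(n)))$ to obtain the alignment partitions, in $O(|T_q|)$ time using hash tables. Third, run the greedy partition-selection procedure described just before the definition of Maximum Subtree Similarity to build a matched set of nodes $M_i$; sorting the partitions by size and streaming through them with hash tables recording the labels already committed to $M_i$ and to $f(M_i)$ costs $O(|T_q|\log|T_q|)$, which is the dominant term. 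Fourth, maintain $|M_i|$, $|E(M_i)|$, the count of exactly matching nodes, and $|T_c|$ while $M_i$ is assembled, so that the score triple $s(T_q,T_c;T_{i_1},T_{i_2},M_i)$ is in hand at once and the running best is updated by an $O(1)$ comparison. Each pair therefore costs $O(|T_q|\log|T_q|)$, and since $|T_{i_1}|,|T_{i_2}|,|M_i| \le |T_q|$ every bound above is legitimate even though $T_{i_2}$ resides in $T_c$. Multiplying the per-pair cost by the number of pairs gives $O(|T_q| \cdot |T_c|) \cdot O(|T_q|\log|T_q|) = O(|T_c|\,|T_q|^2\log|T_q|)$, as claimed.

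The main obstacle is the structural claim of the first paragraph rather than any calculation: one must argue that restricting attention to root-correspondence pairs loses nothing --- that the alignment isomorphism and the maximally similar subtree are both forced by the choice of the two roots --- and, symmetrically, that the maximally similar subtrees produced by Theorem~1 are exactly the $T_{i_1}$'s that the Maximum Subtree Similarity definition ranges over. The remaining care is bookkeeping: confirming that the greedy selection of $M_i$ and the construction of the alignment partitions really run within the stated bounds on inputs of size $O(|T_q|)$, and that the three coordinates of the score can be accumulated incrementally so that no extra factor creeps in.
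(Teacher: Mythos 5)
Your proposal is correct and follows essentially the same route as the paper's proof: enumerate the at most $|T_q|\cdot|T_c|$ aligned-subtree pairs (one per root correspondence), spend $O(|T_q|)$ per pair on the maximal-similarity construction from Theorem~1 and $O(|T_q|\log|T_q|)$ on the greedy choice of $M$, and multiply. The only difference is that you make explicit the justification---that unique outgoing edge labels force the alignment once the two roots are fixed---which the paper leaves implicit in its count of pairs.
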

\begin{proof}
The number of pairs of aligned subtrees is at most $|T_q|*|T_c|$. For each pair, checking whether the roots can be unified requires $O(1)$ time, checking maximum similarity requires $O(|T_q|)$ time, and computing the score requires constant time plus time $O(|T_q|log(|T_q|))$ to choose $M$.
\end{proof}

We show experimentally that this similarity metric performs very well.

\begin{table*}

\caption{NTCIR-11 Wikipedia Formula Retrieval Benchmark Results (100 Queries). For each system the top row shows \% recall (over all hits, top-$\infty$), and the bottom row shows the Mean Reciprocal Rank ~~(mrr, in \%). For Tangent-3, mrr for any formula identical to the target is also shown. }

\scriptsize
~\\
\centering
\begin{tabular}{ l | r | r r r r | r | r r r r }
  & \multicolumn{5}{c}{\sc Document-Centric} & \multicolumn{5}{c}{\sc Formula-Centric} \\
Participant & \sc Total & Easy & Frequent & Variable & Hard &  \sc Total & Easy & Frequent & Variable & Hard\\

\hline
\bf TUW Vienna & \bf 97 & 100 & 100 & 93 & 88 & \bf 93 & 100 & 96 & 89 & 63\\
~~(mrr) & \bf 82 & 97 & 50 & 96 & 54 & \bf 88 & 96 & 72 & 94 & 71\\

\hline

\bf NII Japan & \bf 97 & 98 & 100 & 93 & 100 & \bf 94 & 98 & 96 & 89 & 88\\
~~(mrr) & \bf 76 & 99 & 49 & 82 & 67 & \bf 77 & 89 & 92 & 78 & 48\\

\hline

{\bf Tangent-2 (RIT)} & \bf 88 & 98 & 79 & 89 & 63 & \bf 78 & 95 & 50 & 81 & 63\\
~~(mrr) & \bf 80 & 96 & 31 & 92 & 83 & \bf 86 & 94 & 47 & 96 & 83\\

\hline\hline

\multicolumn{8}{l}{\bf Tangent-3: Using exact formula location on target document} \\

{\bf w=1, EOL} & \bf 100 & 100 & 100 & 100 & 100 & \bf 89 & 95 & 67 & 100 & 88\\
~~(mrr) & \bf 83 & 100 & 55 & 95 & 41 & \bf 85 & 100 & 58 & 93 & 32\\

\hline
{\bf w=1 , No-EOL} & \bf 98 & 98 & 96 & 100 & 100 & \bf 87 & 93 & 63 & 100 & 88\\
~~(mrr) & 82 & 100 & 56 & 94 & 31 & \bf 84 & 100 & 59 & 92 & 32\\

\hline\hline

\multicolumn{8}{l}{\bf Tangent-3: Matching equivalent formulae on target document} \\

{\bf w=1, EOL} & &  &  & &  & \bf 100 & 100 & 100 & 100 & 100\\
~~(mrr) &  &  &  &  &  & \bf 82 & 100 & 55 & 93 & 28\\

\hline

{\bf w=1 , No-EOL} & & & & &  & \bf 98 & 98 & 96 & 100 & 100\\
~~(mrr) & & & & & & \bf 82 & 100 & 56 & 92 & 28\\

\hline
\end{tabular}\\

\label{tab:wikiresults}
\end{table*}

\section{Evaluation} \label{sec:experiments}

In this Section we present experiments designed to observe the effect of system parameters on index size, retrieval time, and search results. We do this using a combination of benchmarks, and a human experiment to evaluate the similarity of the Top-10 formulae returned by our system to query expressions.

{\bf Computational Resources.} We use a Ubuntu Linux 12.04.5 server with 24 Intel Xeon processors (2.93GHz) and 96GB of RAM. While some indexing operations were parallelized (as noted below), {\it all retrieval times are reported for a single process.}

\subsection{NTCIR-11 Formula Retrieval Benchmark} 

The NTCIR-11 Wikipedia benchmark \cite{wikipedia} is 2.5 GB, with 30,000 articles containing roughly 387,947 unique \LaTeX~expressions. The benchmark includes 100 queries for measuring specific-item retrieval performance, where each query is associated with a single target formula in a specific document. One or more wildcard symbols are present in 35 of the queries.
{\it Easy} queries (41) and {\it Frequent} queries (24) have no wildcard symbols, and are distinguished by whether one or multiple formulae in the corpus match the target expression. {\it Variable} queries (27) and {\it Hard} queries (8) contain wildcards, and are again distinguished by whether one or multiple formulae match the target. Search results are returned as a ranked list of $(\mathit{documentId}, \mathit{formulaId})$ pairs.

Systems are evaluated using two metrics. First, by the percentage of targets located (at any rank), and second by the Mean Reciprocal Rank (mrr) of successfully retrieved targets. The `document-centric' evaluation filters results so that document identifiers are listed in their order of appearance in the results. The `formula-centric' results are computed using the complete ranked list of matches. Previously, `formula-centric' results were computed using specific formula identifiers, so for example, given a query and target formula, if the target formula is found at a different location within the target document, this is considered a miss.

At the top of Table \ref{tab:wikiresults}, NTCIR-11 results from the two best systems and from Tangent version 2 are shown as fractions rounded to the nearest percentage \cite{wikipedia}.  The results for Tangent-3 are shown below this in Table \ref{tab:wikiresults}. For the formula-centric evaluation, we present results when defining hits using specific formula identifiers, and when accepting any identical formula on the target document.
The large difference with these two definitions of hits indicates that our system should index all occurrences of formulae in documents, rather than just the first unique occurrence of each formula in a document as done currently.

For Tangent-3, all combinations of window sizes $w=\{1,2,3,4,\infty\}$ (where $\infty$ is all tuples) and including or excluding end-of-line symbols (EOL/No-EOL) were used. Surprisingly, different window sizes had very little effect on performance, although adding EOL tuples increased the number of formulae retrieved by two (e.g. the query  `s' could be located). We show results for windows size 1 ($w=1$) and with and without end-of-line symbols in Table \ref{tab:wikiresults}.

Tangent-3 obtains the best document recall and mrr results to date, with perfect recall when using EOL tuples. The slightly lower {\it Variable} and {\it Hard} query mrr values are an artifact of additional challenging formulae being located at lower ranks. For formula evaluation, the formula recall is improved over Tangent-2 when using exact formula id matches for hits, but when we treat equivalent formula on a document as a match, we again obtain perfect recall using EOL tuples. As before, the mrr is slightly lower than in some other results because additional formulae have been found which are located at relatively low ranks.

From a user's perspective, after formulae are grouped according to their SLT matches (see Figure \ref{fig:q14}), 89 of the target formulae appear in the first (Top-1) group, and 95 within the Top-3 groups. If EOL symbols are added, then 97 of the queries are found in the Top-3 formula groups.

\subsection{Indexing and Retrieval}

We used both NTCIR-11 collections to test the index sizes and retrieval times for our system. In addition to the Wikipedia collection described above, we use the much larger NTCIR-11 arXiv collection to test the scalability of Tangent. The arXiv collection is 174 GB uncompressed, with 8,301,578 documents (fragments from arXiv articles) and roughly 60 million formulae including isolated symbols.

{\bf Indexing Time.} The arXiv data took 43 hours to pre-process the documents (using 10 processes), and at most an additional 3.5 hours to generate the index using a single process (all tuples, with end-of-line tuples). Wikipedia was much faster, requiring 260 seconds for preprocessing, and at most 95 seconds for index creation. As our document pre-processor and re-ranker are implemented in Python, we believe that a faster implementation (e.g., in C++) could reduce run times by a factor of 4-10 in both cases.

{\bf Index Sizes.} As seen in Table \ref{tab:index}, when all tuples are stored with EOL tuples, the Wikipedia index is 499 MB on disk; in contrast, for the arXiv this maximum index size is 29 GB. When these index files are loaded into memory, they consume 2 - 2.5 times their space on disk. Index size increases roughly linearly from window sizes 1-4, with end-of-line tuples increasing storage by a constant amount. For smaller window sizes storage is much smaller; for $w=1$ without end-of-line tuples the index file is 63 MB for Wikipedia, and 5.2 GB for arXiv; these are much smaller than Tangent-2 (1.3 GB for Wikipedia, and roughly 36 GB for the arXiv dataset~\cite{Pattaniyil2014}).

{\bf Retrieval and Reranking Times.} We ran all 100 of the NTCIR-11 Wikipedia queries over the arXiv collection to
test retrieval speed. Retrieval is now much faster than Tangent-2 (see Section \ref{sec:problem}). We see in Figure~\ref{fig:goldstandard}b that median retrieval times are less than 1 second in all conditions, but much faster without end-of-line symbols. For the smaller Wikipedia collection retrieval is much faster, with average retrieval time without EOL at $w=1$ being 7ms ($\sigma=23ms$) and 9ms at $w=2$ ($\sigma=31ms$).

Re-ranking times are consistent across core parameters because the number of candidates reranked is fixed at $k=100$. For Wikipedia, re-rank times are (median, $\mu$, $\sigma$) = $(72, 775, 3562)$ ms. The mean is skewed significantly by a small number of outliers. In one extreme case, re-ranking took 46 seconds, while retrieval from the core ($w=\infty$ with EOL) was just 1.7 seconds.  This particular expression is very large, with 16  wildcard symbols (\emph{Query 52}), producing large hits with many possible unifications. We do not expect to see many queries of this type in common use.

\subsection{Evaluation of MSS Reranking}

\begin{table}
\caption{Index Sizes for NTCIR-11 Collections}
\scriptsize
~\\
\centering
\begin{tabular}{r | r r | r r l}
\hline
~&
\multicolumn{4}{c}{\bf Index Sizes ($MB$)} \\
& \multicolumn{2}{c}{\sc Wikipedia} & \multicolumn{2}{c}{\sc arXiv}\\
w  &   No-EOL & EOL &   No-EOL & EOL\\
\hline

1 & 63.1 & 72.6 & 5,238 & 6,036 \\
2 & 94.4 & 103.9 & 7,419 & 8,216 \\
3 & 126.9 & 136.4 & 9,491 & 10,288 \\
4 & 159.7 & 169.1 & 11,397 & 12,194 \\
$\infty$ & 489.2 & 498.7 & 28,099 & 28,897 \\
\hline

\end{tabular}
\label{tab:index}
\end{table}

\begin{figure}[!tb]
\small
\centering
\begin{tabular}{c c}

\scalebox{0.25}{\includegraphics[trim=32 15 30 20,clip]{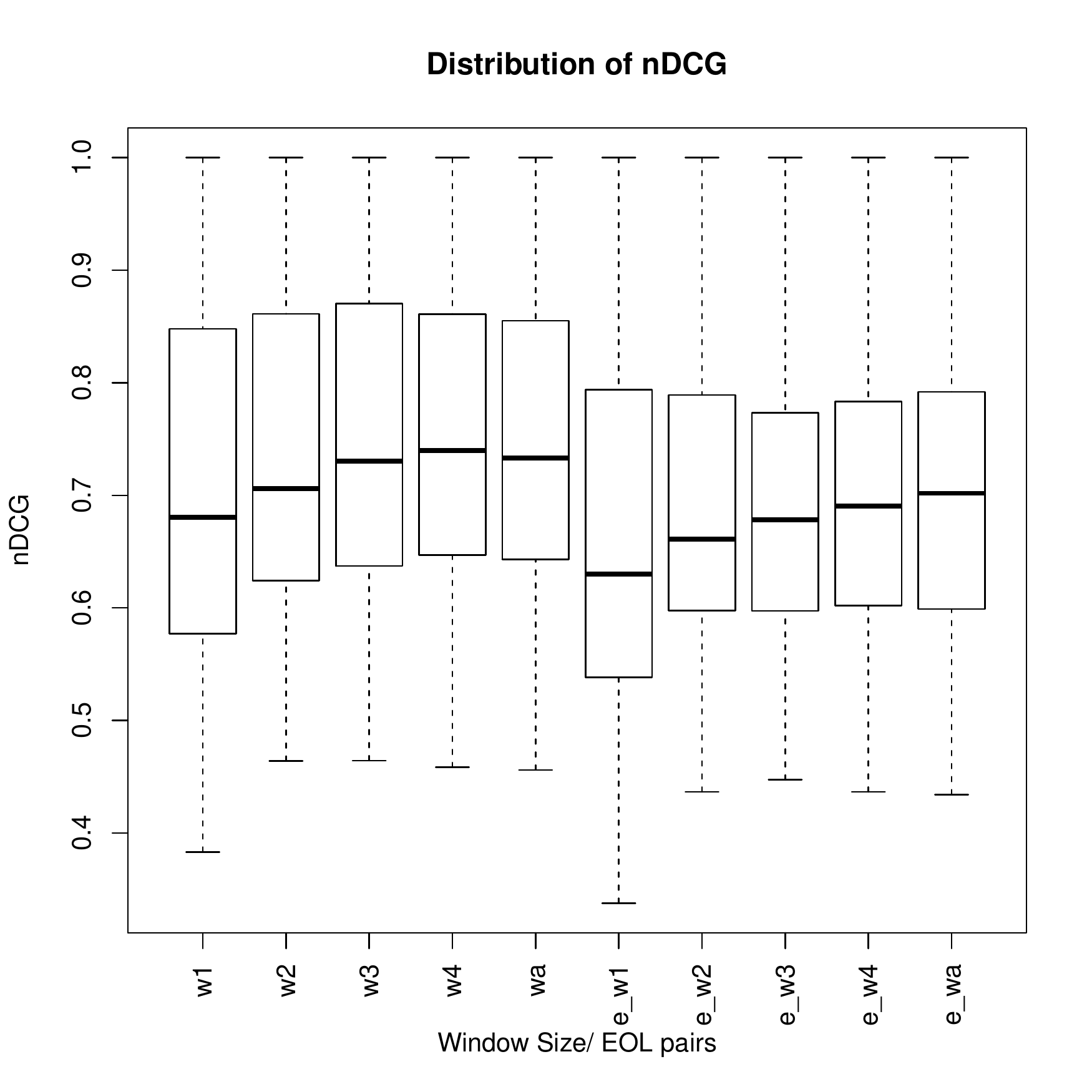}} &
\scalebox{0.25}{\includegraphics[trim=32 15 30 20,clip]{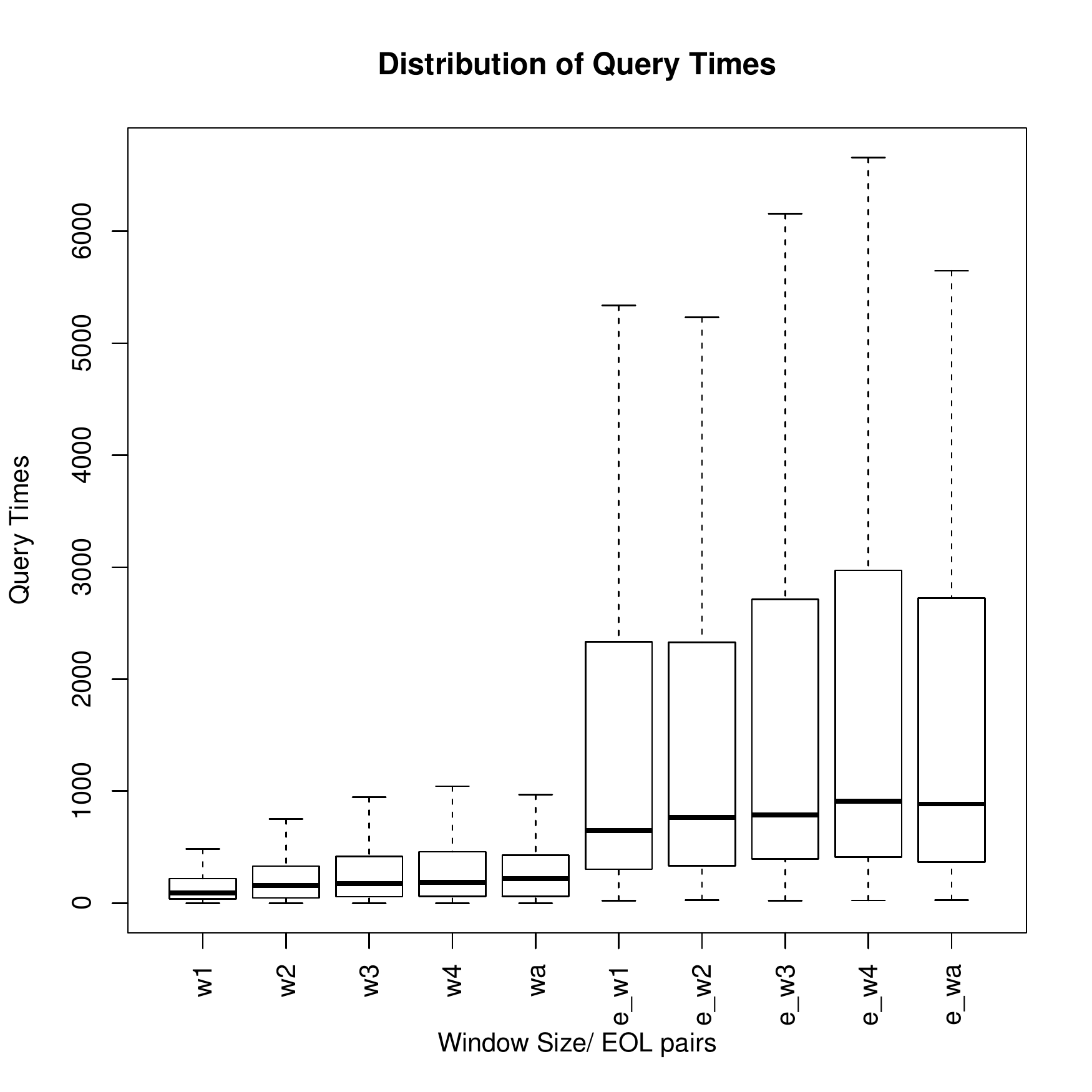}} \\
a) MSS Distributions (nDCG)  &
b) Retrieval times (ms)\\
\end{tabular}

\caption{Distribution of Top-100 nDCG (MSS) Scores and Wikipedia Query Retrieval Times for the NTCIR-11 arXiv Collection.}

\label{fig:goldstandard}
\end{figure}

We now consider how well MSS-based rankings correspond to human perceptions of formula similarity, through evaluating Top-10 results.  

To first select which combinations of parameters to use for our human evaluation, we examined the MSS scores of formulae returned by the core before re-ranking.  In Figure \ref{fig:goldstandard}a, from the Top-100 hits returned by the core for the NTCIR-11 Wikipedia task, we compute normalized Discounted Cumulative Gain (nDCG@100) distributions for the Maximum Subtree Similarity Scores in each of the Top-100 hits compared to an MSS `gold standard.' In the gold standard all formulae in the Wikipedia collection have been scored for each of the 100 Wikipedia queries, and the top-100 formulae for each query are used for normalization. We again consider a number of different window size and EOL parameters ($w=\{1,2,3,4,\infty\}$, with and without EOL tuples). The first five columns show increasing $w$ values without EOL tuples, followed by the same range of $w$ values with EOL tuples.

 Adding EOL tuples shifts values around the median down. We took this as evidence that including EOL tuples was not helping return more similar formulae as measured by nDCG over the MSS scores. Further, as moving from $w=1$ to $w=2$ reduces the variance most dramatically, to keep the number of hits for individual participants to evaluate reasonable, we chose to consider only $w=1, 2,$ and $\infty$.

\subsubsection{Experimental Design}

{\bf Data.} A set of 10 queries were selected using random sampling from the Wikipedia query set. Five of the queries contained wildcards, and the other five did not. 
Some queries were manually rejected and then randomly replaced to insure that a diverse set of expression sizes and structures were collected. Using the Wikipedia collection, for the three versions of the core compared ($w=\{1,2,\infty\}$, no EOL tuples), we applied reranking to the Top-100 hits, and then collected the Top-10 hits returned by each query for rating.

{\bf Evaluation Protocol.} Participants completed the study alone in a private, quiet room with a desktop computer running the evaluation interface in a web browser. The web pages first provided an overview, followed by a demographic questionnaire, instructions on evaluating hits, and then familiarization trials (10 hits; 5 for each of two queries).  After familiarization participants evaluated hits for the 10 queries, 
and finally completed a brief exit questionnaire. Participants were paid \$10 at the end of their session.

Participants rated the similarity of queries to results using a five-point Likert scale (Very Dissimilar, Dissimilar, Neutral, Similar, Very Similar). It has been shown that presenting search results in an ordered list  affects the likelihood of hits being identified as relevant~\cite{guan2007eye}. Instead we presented queries along with each hit in isolation. To avoid other presentation order effects, the order of query presentation was randomized, followed by the order in which hits for each query were presented.

\subsubsection{Results}

{\bf Demographics and Exit Questionnaire.}
21 participants (5 female, 16 male) were recruited from the Computing and Science colleges at our institution. Their age distribution was: 18-24 (8), 25-34 (9), 35-44 (1), 45-54 (1), 55-64 (1) and 65-74 (1). Their highest levels of education completed were: Bachelor's degree (9), Master's degree (9), PhD (2), and Professional Degree (1).  Their reported areas of specialization were: Computer Science (13), Electrical Engineering (2), Psychology (1), Sociology (1), Mechanical Engineering (1), Computer Engineering (1), Math (1) and Professional Studies (1).

In the post-questionnaire, participants rated the evaluation task as Very Difficult (3), Somewhat Difficult (10), Neutral (6), Somewhat Easy (2) or Very Easy (0). They reported different approaches to assessing similarity. Many considered whether operations and operands were of the same type or if two expressions would evaluate to the same result. Others reported considering similarity primarily based on similar symbols, and shared structure between expressions.

\begin{table}
\caption{Mean and Standard Deviation Likert Ratings for Top-10 NTCIR-11 Wikipedia Hits  (21 Participants, 10 queries).}
\centering
\tiny
\begin{tabular} {c | l l l l l}
\hline
& \multicolumn{5}{c}{\sc Rank/Position in Top-10 Hits}\\
& {\bf 1} & {\bf 2} & {\bf 3} & {\bf 4} & {\bf 5} \\
\hline
$w=1$ & 4.54 (0.78) & 3.79 (1.16) & 3.48 (1.31) & 3.20 (1.30) & 2.83 (1.24) \\
$w=2$ & 4.54 (0.78) & 3.71 (1.22) & 3.48 (1.30) & 3.16 (1.28) & 2.90 (1.25) \\
$w=a$ & 4.54 (0.78) & 3.78 (1.18) & 3.59 (1.22) & 3.27 (1.15) & 2.98 (1.24) \\
\hline
& {\bf 6} & {\bf 7} & {\bf 8} & {\bf 9} & {\bf 10} \\
\hline
$w=1$ & 2.94 (1.22) & 2.65 (1.19) & 2.78 (1.20) & 2.78 (1.20) & 2.85 (1.24) \\
$w=2$ & 2.93 (1.19) & 2.85 (1.25) & 2.57 (1.18) & 2.74 (1.22) & 2.80 (1.13) \\
 $w=a$ & 2.92 (1.23) & 2.80 (1.17) & 2.98 (1.23) & 2.92 (1.21) & 2.87 (1.17)\\

\hline
\end{tabular}

\label{likert}
\end{table}

{\bf Similarity Ratings.} As seen in Table~\ref{likert}, the Likert-based similarity rating distributions are 
very similar, and identical in a number of places. In all three conditions, average ratings increase consistently from the 5th to 1st hits. The top 4 formula hits all have an average rating higher than '3,' suggesting that a number of participants felt these formula had some similarity with the query expression. After this the ratings are less than `3' and sometimes shift. Perhaps because matches were not highlighted, in a number of cases exact matches were rated as `4' rather than `5.' As was found for the NTCIR-11 Wikipedia benchmark, it appears that a window size of 1 is able to obtain
strong results. This is appealing, because this requires the smallest index size and has the fastest retrieval times.

\section{Conclusion}

We have presented a new technique for ranking appearance-based formula retrieval results,
using the candidate formula subtree with the
harmonic mean for matched symbols and edges after greedy unification of symbols by type.
This Maximum Subtree Similarity (MSS) metric prefers large connected matches of the query
within the formula. In an experiment we found that for the Top-10 hits, the human ratings of similarity were consistent with the ranking produced by our metric. We have also described an efficient two-stage
implementation of our retrieval model that produces state-of-the-art results for the NTCIR-11 Wikipedia
formula retrieval task, using a much smaller index.

In the future we plan to explore using end-of-line symbols, but only for small expressions.
This will not require much additional space in the index, while greatly reducing the cost of
wildcard end-of-line tuples. 
We also plan to support multiple copies of a formula in a document, devise new methods for
ranking documents based on multiple matches and/or query expressions, and integrate our
formula retrieval system with keyword search.

\section*{Acknowledgements}

This material is based upon work supported by the National Science Foundation (USA) under Grant No. IIS-1016815 and HCC-1218801. Financial support from the Natural Sciences and Engineering Research Council of Canada under Grant No. 9292/2010, Mitacs, and the University of Waterloo is gratefully acknowledged.

\bibliographystyle{abbrv}


\begin{thebibliography}{10}

\bibitem{Aizawa2013bc}
A.~Aizawa, M.~Kohlhase, and I.~Ounis.
\newblock {NTCIR}-10 math pilot task overview.
\newblock In {\em NTCIR}, pages 654--661, 2013.

\bibitem{AizKohOun:nmto14}
A.~Aizawa, M.~Kohlhase, I.~Ounis, and M.~Schubotz.
\newblock {NTCIR-11 Math-2} task overview.
\newblock In {\em NTCIR}, pages 88--98, 2014.

\bibitem{bentley1976almost}
J.~L. Bentley.
\newblock An almost optimal algorithm for unbounded searching.
\newblock {\em Inf. Process. Lett.}, 5(3):82--87, 1976.

\bibitem{broder2003efficient}
A.~Z. Broder, D.~Carmel, M.~Herscovici, A.~Soffer, and J.~Zien.
\newblock Efficient query evaluation using a two-level retrieval process.
\newblock In {\em CIKM}, pages 426--434, 2003.

\bibitem{graf1995}
P.~Graf.
\newblock Substitution tree indexing.
\newblock In {\em RTA}, pages 117--131, 1995.

\bibitem{guan2007eye}
Z.~Guan and E.~Cutrell.
\newblock An eye tracking study of the effect of target rank on web search.
\newblock In {\em SIGCHI}, pages 417--420, 2007.

\bibitem{Hiroya2013yq}
H.~Hiroya and H.~Saito.
\newblock Partial-match retrieval with structure-reflected indices at the
  {NTCIR}-10 math task.
\newblock In {\em NTCIR}, pages 692--695, 2013.

\bibitem{Kamali:2010:NMR:1871437.1871635}
S.~Kamali and F.~W. Tompa.
\newblock A new mathematics retrieval system.
\newblock In {\em CIKM}, pages 1413--1416, 2010.

\bibitem{kamali2013structural}
S.~Kamali and F.~W. Tompa.
\newblock Structural similarity search for mathematics retrieval.
\newblock In {\em MKM/Calculemus/DML}, pages 246--262. 2013.

\bibitem{kohlhase2006}
M.~Kohlhase and I.~Sucan.
\newblock {A search engine for mathematical formulae}.
\newblock In {\em AISC}, pages 23--24, 2006.

\bibitem{Lin:2014:MRS:2600428.2609611}
X.~Lin, L.~Gao, X.~Hu, Z.~Tang, Y.~Xiao, and X.~Liu.
\newblock A mathematics retrieval system for formulae in layout presentations.
\newblock In {\em SIGIR}, pages 697--706, 2014.

\bibitem{Miller2003}
B.~R. Miller and A.~Youssef.
\newblock {Technical aspects of the digital library of mathematical functions}.
\newblock {\em Ann. Math. Artif. Intell.}, 38(1):121--136, 2003.

\bibitem{Nguyen20125820}
T.~T. Nguyen, S.~C. Hui, and K.~Chang.
\newblock A lattice-based approach for mathematical search using formal concept
  analysis.
\newblock {\em Expert Syst. Appl.}, 39(5):5820 -- 5828, 2012.

\bibitem{Pattaniyil2014}
N.~Pattaniyil and R.~Zanibbi.
\newblock Combining {TF-IDF} text retrieval with an inverted index over symbol
  pairs in math expressions: {The} {Tangent} math search engine at {NTCIR
  2014}.
\newblock In {\em NTCIR}, pages 135--142, 2014.

\bibitem{Kumar2012a}
P.~Pavan~Kumar, A.~Agarwal, and C.~Bhagvati.
\newblock A structure based approach for mathematical expression retrieval.
\newblock In {\em MIWAI}, pages 23--34. 2012.

\bibitem{wikipedia}
M.~Schubotz.
\newblock Challenges of mathematical information retrieval in the {NTCIR}-11
  {M}ath {Wikipedia Task}.
\newblock In {\em SIGIR}, to appear, 2015.

\bibitem{sojka2011}
P.~Sojka and M.~L\'{\i}\v{s}ka.
\newblock Indexing and searching mathematics in digital libraries.
\newblock In {\em Calculemus/MKM}, pages 228--243. 2011.

\bibitem{Stalnaker2015lq}
D.~Stalnaker and R.~Zanibbi.
\newblock Math expression retrieval using an inverted index over symbol pairs.
\newblock In {\em DRR}, volume 9402, pages 940207--1--12, 2015.

\bibitem{turtle1995query}
H.~Turtle and J.~Flood.
\newblock Query evaluation: strategies and optimizations.
\newblock {\em Inf. Process. Manage.}, 31(6):831--850, 1995.

\bibitem{wang2011cascade}
L.~Wang, J.~Lin, and D.~Metzler.
\newblock A cascade ranking model for efficient ranked retrieval.
\newblock In {\em SIGIR}, pages 105--114, 2011.

\bibitem{Wangari2014}
K.~Wangari, R.~Zanibbi, and A.~Agarwal.
\newblock Discovering real-world use cases for a multimodal math search
  interface.
\newblock In {\em SIGIR}, pages 947--950, 2014.

\bibitem{Willett79}
P.~Willett.
\newblock Document retrieval experiments using vocabularies of varying size.
  {II.} {Hashing,} truncation, digram and trigram encoding of index terms.
\newblock {\em J. Documentation}, 35:296--305, 1979.

\bibitem{Youssef2006a}
A.~Youssef.
\newblock Roles of math search in mathematics.
\newblock In {\em Proc. Math. Knowl. Manage.}, pages 2--16. 2006.

\bibitem{survey2012}
R.~Zanibbi and D.~Blostein.
\newblock Recognition and retrieval of mathematical expressions.
\newblock {\em Int. J. Document Anal. Recognit.}, 15(4):331--357, 2012.

\end{thebibliography}

\end{document}